\newtheorem{definition}{Definition}
\newtheorem{lemma}{Lemma}
\newtheorem{theorem}{Theorem}
\title{AVEC: Bootstrapping Privacy for Local LLMs}
\author{
\begin{minipage}{0.9\linewidth}
\centering
		Madhava Gaikwad\\
		Microsoft \\
	\texttt{mgaikwad@microsoft.com} \\
\end{minipage}
}
\begin{document}

\maketitle

\begin{abstract}
This position paper presents AVEC (Adaptive Verifiable Edge Control), a framework for bootstrapping privacy for local language models by enforcing privacy at the edge with explicit verifiability for delegated queries. AVEC introduces an adaptive budgeting algorithm that allocates per-query differential privacy parameters based on sensitivity, local confidence, and historical usage, and uses verifiable transformation with on-device integrity checks. We formalize guarantees using R\'enyi differential privacy with odometer-based accounting, and establish utility ceilings, delegation-leakage bounds, and impossibility results for deterministic gating and hash-only certification. Our evaluation is simulation-based by design to study mechanism behavior and accounting; we do not claim deployment readiness or task-level utility with live LLMs. The contribution is a conceptual architecture and theoretical foundation that chart a pathway for empirical follow-up on privately bootstrapping local LLMs.
\end{abstract}

\section{Introduction}

Large language models (LLMs) are increasingly integrated into decision making and user facing applications, yet their deployment is limited by persistent challenges in latency, privacy, and verifiable trust. On device LLMs improve privacy but remain constrained in accuracy and capability due to limited resources, while remote API based LLMs incur costs, delays, and risks of exposing sensitive data. Existing mitigations such as heuristic caching or fixed privacy policies provide only partial solutions, as they fail to adapt to query sensitivity and user context, and they offer no mechanism for independent verification of privacy enforcement. These limitations lead to the bootstrap problem in local LLM development, where privacy concerns and lack of initial training data hinder incremental improvement.

We propose AVEC (Adaptive Verifiable Edge Control), a framework that decentralizes privacy enforcement and introduces explicit verifiability for LLM interactions. AVEC moves budget allocation and verification to a local agent on the user device, minimizing delegation to remote services. The adaptive edge privacy budgeting algorithm computes a differential privacy budget per query by considering sensitivity, user preference, and historical consumption. A translation agent applies budget calibrated transformations and issues a proof of transformation that the local agent can verify before any data leaves the device. This mechanism ensures that privacy enforcement is not only applied but also auditable.

The contributions of this paper are threefold. First, we formalize adaptive edge privacy budgeting as a general method for allocating per query differential privacy guarantees and we provide an instantiation using entity level randomized response. Second, we introduce a verifiable transformation layer that binds declared parameters to observable outputs and prove impossibility of deterministic delegation and hash only proofs. Third, we present a formal privacy accounting model based on Rényi differential privacy and privacy odometers and prove bounds on utility ceilings and leakage through delegation. Together these components establish AVEC as a foundation for verifiable edge privacy in LLM interactions. We validate the framework in simulation across 30 trials and 25 domains, showing reduced delegation, lower cost and latency, and transparent privacy utility trade offs.

\section{Related Work}

Privacy in language model deployment has been studied mainly at training time and with static inference policies. Differential privacy for model training, most prominently through DP-SGD \cite{abadi2016deep}, provides strong guarantees but does not address inference-time privacy. On-device and edge-based models mitigate exposure \cite{alizadeh2024llmflash} but remain limited in capacity and adaptability. Federated learning offers collaborative training without raw data sharing \cite{mcmahan2017communication}, though it focuses on model updates rather than per-query privacy. Verifiable computation methods such as SNARKs \cite{parno2016pinocchio} and Bulletproofs \cite{boneh2018bulletproofs} can certify correctness of computations, but their costs remain prohibitive for interactive LLM inference. Semantic caching improves efficiency by reusing previous responses \cite{bang2023gptcache}, yet provides no formal privacy guarantee. Rényi differential privacy \cite{mironov2019rdp} and privacy odometers and filters \cite{rogers2016privacy} provide refined accounting for adaptive DP mechanisms, but have not been applied to inference-time LLM pipelines. AVEC connects these threads by combining adaptive edge privacy budgeting, verifiable proof of transformation, and formal DP accounting for query-level LLM interactions.
\section{Methodology}

AVEC is designed as a three tier architecture consisting of a local agent on the user device, a translation agent that applies privacy transformations, and a remote agent that executes standard LLM inference. The design goal is to move privacy enforcement and verification to the edge, ensuring that sensitive queries are transformed and budgeted before leaving the device. The methodology combines adaptive budgeting, differentially private transformations, and verifiable proofs of transformation into a coherent pipeline.

\subsection{Local Agent and Adaptive Edge Privacy Budgeting}

The local agent manages user preferences and budget accounting. For a query $q_i$, it computes a proposed privacy budget
\[
\Delta \varepsilon_i = \big(\varepsilon_{\text{base}} \cdot S_{q_i} \cdot (1 - C_{\text{local}})\big) \cdot F_{\text{seq}} + \eta,
\]
where $\varepsilon_{\text{base}}$ is set from user configuration, $S_{q_i}$ is a sensitivity score in $[0,1]$, $C_{\text{local}}$ is the confidence of local fulfillment, $F_{\text{seq}}$ is a decaying factor based on session length, and $\eta$ is Laplace noise added to protect the budgeting process itself. The effective budget for the query is capped by the user’s remaining allocation, ensuring $\sum_i \varepsilon_i \leq \varepsilon_{\max}$. If local confidence is high, the query is answered locally without consuming budget. Otherwise the delegated query is assigned $\varepsilon_i$ and passed to the translation agent.

\subsection{Translation Agent and Entity Level Differential Privacy}

The translation agent transforms queries according to the allocated budget. We instantiate the mechanism with entity level randomized response. Sensitive entities such as names, identifiers, or dates are detected using simple patterns or named entity recognition. For each detected entity token $t$ from a vocabulary of size $k$, the randomized response mechanism outputs
\[
\Pr[\tilde t = t] = \frac{e^{\varepsilon_i}}{e^{\varepsilon_i} + k - 1}, \quad
\Pr[\tilde t = t'] = \frac{1}{e^{\varepsilon_i} + k - 1} \;\; (t' \neq t).
\]
This satisfies $(\varepsilon_i,0)$ differential privacy per entity under entity adjacency. The probability above provides a tight ceiling on the best possible recovery accuracy, establishing a utility frontier that cannot be exceeded by any estimator. Multiple entities in a query compose sequentially, and multiple queries compose over time under standard DP accounting. The translation agent records transformation parameters and a cryptographic hash of the parameters to produce a proof of transformation that can be checked by the local agent. The proof does not include functions of the original text, ensuring that it is pure post processing and does not degrade privacy.

\subsection{Delegation Gating and Leakage}

Each query may be handled locally or delegated. If the delegation bit $G$ is released, it is privatized with binary randomized response at parameter $\varepsilon_{\text{gate}}$. For any adversary, the Bayes advantage in distinguishing the true delegation decision is bounded by $\tanh(\varepsilon_{\text{gate}}/2)$. If $G$ were released deterministically based on input features, no finite $\varepsilon$ could satisfy differential privacy under entity adjacency, demonstrating the necessity of randomized gating. This impossibility result motivates treating the gating channel explicitly as part of the privacy budget.

\subsection{Remote Agent and Post Processing}

The remote agent is a standard large capacity LLM that receives only transformed queries. Since differential privacy is closed under post processing, the remote model’s output and the verifiability proof add no additional privacy cost beyond the transformation. This separation allows AVEC to remain mechanism agnostic: stronger transformation modules such as embedding perturbation or zero knowledge proofs can replace entity level randomized response without altering the architecture.

\subsection{Privacy Accounting Framework}

To analyze end to end privacy loss, we adopt R\'enyi differential privacy (RDP) as the accounting method. Each delegated step contributes an RDP cost $\varepsilon^{\text{RDP}}_i(\alpha)$ for order $\alpha > 1$, including terms for both the gating mechanism and entity randomized response. These costs add linearly even when budgets are chosen adaptively, consistent with the privacy odometer framework. At the end of a session the total RDP loss is converted to $(\varepsilon,\delta)$ differential privacy using Mironov’s conversion. Because only a fraction of queries are delegated, privacy amplification by subsampling yields tighter bounds for the transformation cost. This formalism provides a general accounting method for AVEC, independent of the specific instantiation of the translation agent.

\section{Experimental Setup}

To evaluate AVEC we implemented a simulation framework that captures the interaction flow between local, translation, and remote agents. The goal of the experiments is to quantify delegation behavior, budget consumption, and cost latency trade offs under controlled conditions while enabling reproducible statistical analysis. All language model functions such as sensitivity assessment and local confidence are simulated with deterministic or randomized heuristics so that trials are repeatable.

Each trial involves one hundred simulated users, each with individualized configuration parameters including privacy preference and maximum cumulative budget. Every user submits ten queries drawn randomly from a pool covering twenty five domains such as medical, financial, legal, and creative contexts. The dataset was constructed to ensure that each query contains elements requiring privacy treatment, for example named entities or identifiers. The order of queries is randomized per user per trial to avoid sequence bias. The complete evaluation consists of thirty independent trials, yielding thirty thousand queries per scenario and enabling robust significance testing across repetitions.

We compare AVEC in both high and medium privacy configurations against several baseline policies. The always delegate baseline sends all queries remotely without transformation. The fixed epsilon baselines use static privacy budgets of 0.1, 1.0, and 5.0 applied to every delegated query. The always local baseline attempts to answer every query on device without delegation, reflecting an extreme low utility policy. These baselines provide reference points for measuring the impact of adaptive budgeting and verifiable transformation.

For each processed query we record delegation status, privacy budget consumed, simulated cost and latency, response quality type, and verification outcome. Costs and latencies are simulated with additive and multiplicative factors tied to transformation complexity and remote inference time. Response quality is categorized by level of privatization, ranging from highly privatized to lightly privatized or local high confidence responses. Verification status indicates whether the proof of transformation generated by the translation agent matched the recomputed proof on the local device. Statistical analysis is conducted on average costs, delegation rates, and distributional differences across scenarios using standard parametric and non parametric tests. We also compute effect sizes to quantify the magnitude of improvements provided by AVEC relative to the baselines.

\section{Formal Proofs}
\label{proofs}

\subsection{Preliminaries}

\begin{definition}[Neighboring datasets]
	Two datasets $D, D'$ are neighbors if they differ in the data of a single individual. Adjacency may be defined at the entity level, where $D$ and $D'$ differ in a single token, or at the user level, where they differ in the entire contribution of one user.
\end{definition}

\begin{definition}[$(\varepsilon,\delta)$-differential privacy]
	A randomized mechanism $M: \mathcal{D} \rightarrow \mathcal{R}$ satisfies $(\varepsilon,\delta)$-differential privacy if for all neighboring datasets $D, D'$ and all measurable subsets $S \subseteq \mathcal{R}$,
	\[
	\Pr[M(D) \in S] \leq e^{\varepsilon} \Pr[M(D') \in S] + \delta.
	\]
\end{definition}

\begin{definition}[Rényi differential privacy \cite{mironov2019rdp}]
	A randomized mechanism $M$ satisfies $(\alpha,\varepsilon)$-RDP of order $\alpha > 1$ if for all neighboring datasets $D, D'$,
	\[
	D_{\alpha}(M(D) \| M(D')) \leq \varepsilon,
	\]
	where $D_{\alpha}$ denotes the Rényi divergence of order $\alpha$. RDP composes additively over multiple mechanisms and can be converted to $(\varepsilon,\delta)$-DP.
\end{definition}

\subsection{Privacy of the AEPB Mechanism}

\begin{lemma}[AEPB privacy]
	Let $f_{\mathrm{AEPB}}$ be the deterministic function that maps query features to a proposed privacy budget $\Delta \varepsilon$. Suppose $\Delta \varepsilon$ is released with Laplace noise $\eta \sim \mathrm{Laplace}(0, b)$. Then the mechanism releasing $\Delta \varepsilon + \eta$ satisfies $(\varepsilon_{\eta},0)$-differential privacy with $\varepsilon_{\eta} = \Delta f / b$, where $\Delta f$ is the $\ell_1$ sensitivity of $f_{\mathrm{AEPB}}$.
\end{lemma}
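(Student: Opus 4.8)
The plan is to recognize this statement as the privacy guarantee for the classical Laplace mechanism applied to the scalar-valued function $f_{\mathrm{AEPB}}$, and to establish it by bounding the pointwise ratio of output densities under neighboring inputs. Since the released quantity $M(D) = f_{\mathrm{AEPB}}(D) + \eta$ is a real number perturbed by additive Laplace noise, I would work directly with probability densities rather than with the subset form of the definition, and only pass to measurable sets at the very end.

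First I would write down the density of the Laplace distribution with scale $b$, namely $p(x) = \tfrac{1}{2b}\exp(-|x|/b)$, so that the density of $M(D)$ evaluated at an output $y$ is $\tfrac{1}{2b}\exp\!\big(-|y - f_{\mathrm{AEPB}}(D)|/b\big)$. For neighboring datasets $D, D'$ I would form the likelihood ratio
\[
\frac{p_{M(D)}(y)}{p_{M(D')}(y)} = \exp\!\left(\frac{|y - f_{\mathrm{AEPB}}(D')| - |y - f_{\mathrm{AEPB}}(D)|}{b}\right).
\]
The crux of the argument is the reverse triangle inequality, which gives $|y - f_{\mathrm{AEPB}}(D')| - |y - f_{\mathrm{AEPB}}(D)| \le |f_{\mathrm{AEPB}}(D) - f_{\mathrm{AEPB}}(D')|$, and this last quantity is at most the $\ell_1$ sensitivity $\Delta f$ by definition of $\Delta f$ as the maximum over neighboring pairs. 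Hence the ratio is bounded pointwise in $y$ by $\exp(\Delta f / b) = \exp(\varepsilon_\eta)$.

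Finally I would integrate this pointwise bound over any measurable set $S$: since $p_{M(D)}(y) \le e^{\varepsilon_\eta} p_{M(D')}(y)$ for every $y$, integrating over $S$ yields $\Pr[M(D) \in S] \le e^{\varepsilon_\eta}\Pr[M(D') \in S]$, which is exactly $(\varepsilon_\eta, 0)$-differential privacy with $\delta = 0$. I do not anticipate a genuine obstacle here, as the result is standard; the only point requiring mild care is confirming that the stated scalar setting legitimately reduces the $\ell_1$ sensitivity to $\max_{D \sim D'} |f_{\mathrm{AEPB}}(D) - f_{\mathrm{AEPB}}(D')|$, so that the reverse triangle inequality delivers exactly the sensitivity and no extraneous constant, giving the clean identity $\varepsilon_\eta = \Delta f / b$.
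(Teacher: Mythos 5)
Your proposal is correct and follows essentially the same route as the paper: the paper's proof simply invokes the standard Laplace-mechanism likelihood-ratio bound $\Pr[M(D)=z]/\Pr[M(D')=z] \leq \exp(\Delta f/b)$ and concludes, while you fill in the underlying derivation (explicit Laplace density, reverse triangle inequality, integration over measurable sets). Your version is more self-contained but proves the identical fact by the identical mechanism-level argument.
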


\begin{proof}
	By the Laplace mechanism, for any two neighboring datasets $D, D'$ and any output $z$,
	\[
	\frac{\Pr[M(D)=z]}{\Pr[M(D')=z]} \leq \exp\!\left(\frac{\Delta f}{b}\right).
	\]
	Thus the release satisfies $(\varepsilon_{\eta},0)$-DP.
\end{proof}

\subsection{Utility Ceiling for Entity Randomized Response}

\begin{theorem}[Entity-DP utility ceiling]
	Consider a vocabulary of size $k$ and the $k$-ary randomized response mechanism with parameter $\varepsilon$. For any estimator $\hat{t}$ attempting to recover the original token $t$ from the privatized output $\tilde{t}$,
	\[
	\Pr[\hat{t} = t] \leq \frac{e^{\varepsilon}}{e^{\varepsilon} + k - 1}.
	\]
\end{theorem}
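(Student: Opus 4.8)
The plan is to recognize this as a Bayes-optimal decoding problem for a symmetric channel and to show that the uniform-prior MAP estimator both attains the claimed value and upper-bounds every competitor. First I would fix notation: write the channel as $p(\tilde t \mid t) = \frac{e^{\varepsilon}}{e^{\varepsilon} + k - 1}$ when $\tilde t = t$ and $p(\tilde t \mid t) = \frac{1}{e^{\varepsilon} + k - 1}$ otherwise, and model an arbitrary estimator as a (possibly randomized) decision rule $\hat t = g(\tilde t)$ that has access only to the privatized output $\tilde t$. The observation driving the interpretation is that the true token $t$ is unknown to the estimator; otherwise recovery is trivial. Hence $\Pr[\hat t = t]$ must be read as the success probability of a rule committed to before $t$ is revealed, which by the standard minimax/Bayes correspondence is maximized by the posterior-maximizing (MAP) rule under the least-favorable prior.

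The central step is a per-output posterior computation. Placing the uniform prior $\Pr[t] = 1/k$ on the alphabet, Bayes' rule gives, for each observed symbol $\tilde t$, the posterior $\Pr[t \mid \tilde t] = p(\tilde t \mid t) / \sum_{t'} p(\tilde t \mid t')$. The denominator collapses to $\frac{e^{\varepsilon} + (k-1)}{e^{\varepsilon} + k - 1} = 1$ by symmetry of the channel, and the numerator is maximized by taking $t = \tilde t$ because $e^{\varepsilon} > 1$ makes the diagonal mass strictly dominant. Thus the maximal posterior equals $\frac{e^{\varepsilon}}{e^{\varepsilon} + k - 1}$ for every $\tilde t$, independent of the observed symbol.

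Since the best achievable per-output posterior is the same constant for all outputs, averaging against any marginal distribution over $\tilde t$ yields overall accuracy exactly $\frac{e^{\varepsilon}}{e^{\varepsilon} + k - 1}$ for the MAP decoder, which here is the identity rule $\hat t = \tilde t$. To upgrade this to a bound over all estimators I would invoke the optimality of MAP for $0$--$1$ loss: any deterministic rule cannot exceed the pointwise posterior maximum, and any randomized rule is a convex mixture of deterministic ones and hence no better. I expect the main obstacle to be the framing rather than the algebra, namely making precise the sense in which $\Pr[\hat t = t]$ is taken (the prior and adversary model) so that ``any estimator'' is a meaningful and non-vacuous quantifier, since an estimator granted knowledge of $t$ trivially attains probability one. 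Establishing the least-favorable-prior reduction, or equivalently arguing directly that the identity decoder is minimax-optimal for this symmetric channel, is the step that carries the real content.
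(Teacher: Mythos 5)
Your proposal is correct and follows essentially the same route as the paper's own proof: identify the Bayes-optimal (MAP) decoder, use the symmetry of the $k$-ary randomized response channel to compute the maximal posterior as $\tfrac{e^{\varepsilon}}{e^{\varepsilon}+k-1}$, and conclude that no estimator can do better under $0$--$1$ loss. You supply more detail than the paper does (the explicit uniform-prior posterior computation, the reduction of randomized rules to deterministic ones, and the observation that the statement needs a worst-case or uniform-prior reading of $\Pr[\hat t = t]$ to be non-vacuous), but the underlying argument is the same.
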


\begin{proof}
	The Bayes optimal estimator selects the token with the highest posterior probability given $\tilde{t}$. By symmetry of randomized response, this is the true token with probability $\tfrac{e^{\varepsilon}}{e^{\varepsilon} + k - 1}$. No estimator can exceed this bound.
\end{proof}

\subsection{Delegation Gating Privacy}

\begin{theorem}[Delegation leakage bound]
	If the delegation bit $G$ is released using binary randomized response with parameter $\varepsilon_{\mathrm{gate}}$, the maximum Bayes advantage of an adversary distinguishing the true delegation decision is bounded by
	\[
	\text{Adv} \leq \tanh\!\left(\frac{\varepsilon_{\mathrm{gate}}}{2}\right).
	\]
\end{theorem}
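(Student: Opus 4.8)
The plan is to reduce the advantage bound to a computation of the total variation distance between the two possible output distributions of the gate, and then to simplify the resulting expression to a hyperbolic tangent. First I would fix the concrete form of the mechanism: binary randomized response at parameter $\varepsilon_{\mathrm{gate}}$ reports the true bit with probability $p = e^{\varepsilon_{\mathrm{gate}}}/(e^{\varepsilon_{\mathrm{gate}}}+1)$ and flips it with probability $1-p = 1/(e^{\varepsilon_{\mathrm{gate}}}+1)$. Writing $P_0$ and $P_1$ for the output distributions induced by $G=0$ and $G=1$ respectively, these are two Bernoulli laws on $\{0,1\}$ that are mirror images of one another.

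Next I would make precise what ``Bayes advantage'' means and relate it to a statistical distance. Under a prior on $G$, the optimal distinguisher is the maximum a posteriori rule, and for \emph{any} adversary strategy the excess success probability over blind guessing is controlled by the total variation distance $\mathrm{TV}(P_0,P_1) = \tfrac12\|P_0 - P_1\|_1$; equivalently, under the symmetric (uniform) prior the optimal adversary attains $\mathrm{Adv} = 2\Pr[\text{correct}] - 1 = \mathrm{TV}(P_0,P_1)$, and no strategy can exceed this. I would invoke the standard hypothesis-testing characterization so that the inequality in the statement is seen to be tight, saturated by the MAP adversary.

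The remaining work is a one-line computation: since $P_0$ and $P_1$ differ by $|p-(1-p)| = |2p-1|$ on each of the two atoms, $\mathrm{TV}(P_0,P_1) = |2p-1| = (e^{\varepsilon_{\mathrm{gate}}}-1)/(e^{\varepsilon_{\mathrm{gate}}}+1)$. Applying the identity $\tanh(x/2) = (e^{x}-1)/(e^{x}+1)$ then yields $\mathrm{Adv} \le \tanh(\varepsilon_{\mathrm{gate}}/2)$, as claimed.

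The only genuine obstacle is definitional rather than computational: I must pin down the notion of advantage so that the reduction to total variation is exactly right. In particular I would be careful to define the advantage as a worst case over adversary strategies (and, if one wishes, over priors), and to confirm that the mirror symmetry of randomized response makes the uniform prior the maximizer, so that the stated bound is both valid for every adversary and saturated by the best one. Everything after that — the $\mathrm{TV}$ evaluation and the $\tanh$ simplification — is routine.
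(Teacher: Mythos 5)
Your proof is correct, and it takes a slightly but genuinely different route from the paper's. The paper argues top-down: it notes that binary randomized response is $(\varepsilon_{\mathrm{gate}},0)$-DP, so the likelihood ratio between the two output distributions is bounded by $e^{\varepsilon_{\mathrm{gate}}}$, and then it invokes the general hypothesis-testing interpretation of differential privacy to conclude that the adversarial advantage is at most $(e^{\varepsilon_{\mathrm{gate}}}-1)/(e^{\varepsilon_{\mathrm{gate}}}+1)=\tanh(\varepsilon_{\mathrm{gate}}/2)$. You instead work bottom-up with the concrete mechanism: you write out the two Bernoulli output laws, compute their total variation distance exactly as $|2p-1|=(e^{\varepsilon_{\mathrm{gate}}}-1)/(e^{\varepsilon_{\mathrm{gate}}}+1)$, and identify the Bayes advantage with that distance. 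The paper's argument is more general --- it applies verbatim to any $(\varepsilon_{\mathrm{gate}},0)$-DP release of the delegation bit, not just randomized response --- while yours is more self-contained and additionally establishes tightness, since the MAP adversary under the uniform prior achieves the bound exactly rather than merely respecting it. Your care in pinning down the definition of advantage (worst case over strategies, uniform prior as the maximizer by the mirror symmetry of the mechanism) is exactly the definitional work the paper elides by citing the hypothesis-testing interpretation, so the two proofs are complementary rather than in conflict.
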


\begin{proof}
	For $(\varepsilon,0)$-DP the likelihood ratio between any two outcomes is bounded by $e^{\varepsilon}$. The binary hypothesis testing interpretation of DP implies adversarial advantage bounded by $\tfrac{e^{\varepsilon} - 1}{e^{\varepsilon} + 1} = \tanh(\varepsilon/2)$.
\end{proof}

\begin{theorem}[Impossibility of deterministic delegation]
	If the delegation decision $G$ is a deterministic function of the input query and is not constant across some pair of entity-adjacent queries, then no finite $\varepsilon$ and $\delta < 1/2$ exist such that releasing $G$ satisfies $(\varepsilon,\delta)$-differential privacy.
\end{theorem}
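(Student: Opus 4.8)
The plan is to exploit the fact that a \emph{deterministic} release produces point-mass output distributions, so the $(\varepsilon,\delta)$-DP inequality collapses into a numerical contradiction as soon as we feed it an event that witnesses the disagreement of $G$ on the adjacent pair. First I would fix, by hypothesis, a pair of entity-adjacent queries $D, D'$ on which $G$ is not constant, and relabel the outcome $\{0,1\}$ so that $G(D) = 0$ and $G(D') = 1$. Since the mechanism is $M(\cdot) = G(\cdot)$ with no internal randomness, its output laws are Dirac masses: $\Pr[M(D) = 0] = 1$, $\Pr[M(D) = 1] = 0$, and symmetrically $\Pr[M(D') = 1] = 1$, $\Pr[M(D') = 0] = 0$.

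The next step is to instantiate the DP definition at the single worst-case event $S = \{0\}$ for this ordered neighboring pair. Plugging in gives
\[
1 = \Pr[M(D) \in S] \;\le\; e^{\varepsilon}\,\Pr[M(D') \in S] + \delta \;=\; e^{\varepsilon}\cdot 0 + \delta \;=\; \delta .
\]
Here the crucial use of \textbf{finiteness of $\varepsilon$} is that $e^{\varepsilon} < \infty$, so the product $e^{\varepsilon}\cdot 0$ is genuinely $0$ rather than an indeterminate form; this is exactly the place where an infinite budget would escape the argument. The displayed chain forces $\delta \ge 1$, which contradicts the assumption $\delta < 1/2$ (indeed it contradicts any $\delta < 1$). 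Hence no finite $\varepsilon$ together with $\delta < 1/2$ can satisfy $(\varepsilon,\delta)$-DP, as claimed.

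Honestly, there is no serious analytic obstacle here: the entire content is the observation that a deterministic non-constant map cannot smear probability across the neighbor boundary, so a zero-probability event in the denominator kills the multiplicative $e^{\varepsilon}$ slack. The only points that require care, and which I would state explicitly rather than prove laboriously, are (i) that DP quantifies over all ordered neighboring pairs, so it suffices to exhibit one pair and one measurable $S$ violating the bound; (ii) that the threshold $1/2$ in the statement is slack, since the argument already yields the sharper obstruction $\delta \ge 1$; and (iii) that the relabeling of outcomes is without loss of generality because the witnessing pair exists by hypothesis. I would therefore present the proof as the short deduction above, emphasizing the role of $e^{\varepsilon} < \infty$ as the structural reason randomization of the gating bit is unavoidable.
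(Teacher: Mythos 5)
Your proof is correct and follows essentially the same route as the paper's: exhibit the entity-adjacent pair on which $G$ disagrees, observe the output laws are point masses, and instantiate the DP inequality on the witnessing event to force $\delta \ge 1$, contradicting $\delta < 1/2$. Your version is merely more explicit about the choice of $S$ and the role of $e^{\varepsilon} < \infty$, and correctly notes the paper's threshold $1/2$ is slack.
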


\begin{proof}
	Suppose there exist neighboring queries $q,q'$ such that $G(q) \neq G(q')$. Then $\Pr[G(q)=1]=1$ and $\Pr[G(q')=1]=0$, which violates the DP inequality for any finite $\varepsilon$ with $\delta < 1/2$. Thus deterministic delegation cannot satisfy differential privacy.
\end{proof}

\subsection{Impossibility of Hash-only Proofs}

\begin{theorem}[Impossibility of hash-only DP certification]
	Suppose a verifier observes only a hash of transformation parameters and public policy text, with no attested execution trace. Then there exists no general method for such a verifier to certify that the transformation satisfies $(\varepsilon,\delta)$-DP for all inputs under entity adjacency.
\end{theorem}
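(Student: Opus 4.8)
The plan is to prove impossibility by exhibiting two distinct transformation behaviors that are indistinguishable to a verifier who sees only the hash of parameters and the public policy text. The key observation is that a hash of the declared parameters $(\varepsilon,\delta)$ certifies only what the translation agent \emph{claims} to do, not what it \emph{actually} does to the input. First I would formalize the verifier's view: a function $V$ that takes as input a hash $h = H(\theta)$ of declared parameters $\theta$ together with fixed public policy text, and outputs accept or reject. Crucially, $V$ has no access to the realized input--output behavior of the mechanism, i.e. no attested execution trace linking the claimed $\theta$ to the actual transformation applied.

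The core of the argument is an indistinguishability construction. I would consider two mechanisms $M_1$ and $M_2$ that both declare the same parameters $\theta$ (hence produce the same hash $h$), where $M_1$ genuinely implements $(\varepsilon,\delta)$-DP randomized response with parameter $\varepsilon$ as in the utility-ceiling theorem, while $M_2$ ignores $\theta$ and instead applies the identity map (or any deterministic leak) to the sensitive entities. Since both present the identical hash $h$ and identical policy text to $V$, and these are the verifier's only inputs, we have $V(h,\text{policy}) = V(h,\text{policy})$ trivially---the verifier's decision is a fixed value independent of which mechanism was actually run. Therefore $V$ must return the same verdict for $M_1$ and $M_2$. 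If $V$ accepts $M_1$ (the honest case, which any useful certifier must), it also accepts $M_2$. But $M_2$ is deterministic and non-constant across some entity-adjacent pair, so by the Impossibility of deterministic delegation theorem it satisfies no $(\varepsilon,\delta)$-DP with $\delta < 1/2$. Hence $V$ certifies a mechanism that violates DP, contradicting soundness of certification.

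The step I expect to be the main obstacle is pinning down the exact information model so the argument is airtight rather than merely suggestive. In particular I must rule out the degenerate reading in which the policy text itself constrains the mechanism strongly enough to force DP compliance; I would handle this by noting that the policy is \emph{public and fixed} and therefore carries no per-execution binding to the realized transformation, so any mechanism whose declared $\theta$ matches the policy passes the same check regardless of its true behavior. A second subtlety is the collision-resistance assumption on $H$: I am not relying on hash collisions, but rather on the fact that $M_1$ and $M_2$ deliberately declare the \emph{same} $\theta$, so no collision is needed---both honestly hash identical declared parameters while behaving differently. This cleanly separates the failure from cryptographic hardness of $H$ and locates it instead in the absence of an attested binding between declared parameters and executed computation.

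Finally I would close the argument by observing that the construction is fully general: for \emph{any} proposed verifier $V$ operating on the stated view, the pair $(M_1, M_2)$ defeats it, so no general certification method can exist. I would remark that this is precisely why AVEC's architecture requires the proof of transformation to bind declared parameters to \emph{observable outputs} via on-device recomputation, rather than relying on a hash alone, thereby motivating the verifiable transformation layer as a necessary rather than incidental design choice.
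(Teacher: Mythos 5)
Your proposal is correct and follows essentially the same route as the paper's proof: both construct two translation agents that declare identical parameters (hence identical hashes) but differ in actual input--output behavior, one satisfying DP and one not, and conclude the verifier cannot distinguish them. Your version is more detailed --- explicitly formalizing the verifier as a function of the hash and policy text, instantiating the non-private mechanism as the identity map, and invoking the deterministic-delegation impossibility to certify its non-privacy --- but these are elaborations of the paper's sketch rather than a different argument.
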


\begin{proof}
	Differential privacy is a relational property that requires comparing outputs on pairs of neighboring inputs. A hash of parameters or policy does not bind the actual mapping from input to output. Construct two translation agents that share the same parameter hash but differ on hidden input behavior, one satisfying DP and the other not. The verifier cannot distinguish them. Hence hash-only proofs provide integrity but not privacy certification.
\end{proof}

\subsection{Composition of AVEC Interactions}

\begin{theorem}[End-to-end privacy of AVEC]
	\label{thm:avec-composition}
	Consider a user interacting with the system for $k$ queries. For each query $i$, the local agent privatizes the delegation bit with randomized response at parameter $\varepsilon_{\mathrm{gate}}$ and privatizes each sensitive entity with randomized response at parameter $\varepsilon_i$ chosen adaptively. Let $\mathcal{M}_i$ denote the combined mechanism for query $i$. Then the transcript $\mathcal{T} = (\mathcal{M}_1, \ldots, \mathcal{M}_k)$ satisfies $(\varepsilon_{\mathrm{tot}}, \delta_{\mathrm{tot}})$-differential privacy, where
	\[
	\varepsilon_{\mathrm{tot}} = \mathrm{ConvertRDP}\!\left(\sum_{i=1}^{k} \varepsilon^{\mathrm{RDP}}_{\mathcal{M}_i}(\alpha), \, \delta^{\ast}\right),
	\quad \delta_{\mathrm{tot}} = k \cdot \delta_{\mathrm{ent}} + \delta^{\ast}.
	\]
\end{theorem}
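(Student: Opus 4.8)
The plan is to establish end-to-end privacy by composing the per-query mechanisms through the Rényi differential privacy (RDP) accountant and then converting the accumulated RDP budget back to an $(\varepsilon,\delta)$ guarantee. First I would fix an order $\alpha > 1$ and analyze a single query mechanism $\mathcal{M}_i$. This mechanism is itself a composition of two sub-mechanisms: the gating release (binary randomized response at $\varepsilon_{\mathrm{gate}}$) and the entity privatization (a sequence of $k$-ary randomized responses at $\varepsilon_i$, one per detected entity). I would bound the RDP cost of each randomized response at order $\alpha$ --- each $(\varepsilon, 0)$-DP mechanism admits a standard RDP bound --- and then invoke additive RDP composition within the query to obtain $\varepsilon^{\mathrm{RDP}}_{\mathcal{M}_i}(\alpha)$. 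Because the mechanisms are pure $(\varepsilon,0)$-DP per entity, I can either carry the exact per-entity $\delta=0$ bound or, where an approximate relaxation is used, account for a residual $\delta_{\mathrm{ent}}$ per query, which is what surfaces in the $k \cdot \delta_{\mathrm{ent}}$ term.

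Next I would compose across the $k$ queries. The central subtlety is that the budgets $\varepsilon_i$ are chosen \emph{adaptively}, so ordinary fixed-composition theorems do not directly apply; I would instead appeal to the privacy odometer and RDP filter framework cited in the preliminaries, which guarantees that RDP costs still add linearly under adaptive choice of parameters, conditioned on the transcript observed so far. Concretely, I would argue that at each order $\alpha$ the total RDP loss of the transcript $\mathcal{T}$ is bounded by $\sum_{i=1}^{k} \varepsilon^{\mathrm{RDP}}_{\mathcal{M}_i}(\alpha)$, treating each $\mathcal{M}_i$ as a mechanism that may depend on the realized outputs of $\mathcal{M}_1, \ldots, \mathcal{M}_{i-1}$. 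The remote agent's output and the proof of transformation are post-processing of $\mathcal{T}$ and, by closure of RDP under post-processing, contribute nothing further.

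Finally I would convert the aggregate RDP bound to $(\varepsilon,\delta)$-DP using Mironov's conversion, which for any target slack $\delta^\ast$ yields an $(\varepsilon_{\mathrm{tot}}, \delta^\ast)$ guarantee with $\varepsilon_{\mathrm{tot}} = \mathrm{ConvertRDP}(\sum_i \varepsilon^{\mathrm{RDP}}_{\mathcal{M}_i}(\alpha), \delta^\ast)$; optimizing over $\alpha$ tightens the result. The two $\delta$ contributions then combine additively: the $k \cdot \delta_{\mathrm{ent}}$ term accumulates any per-query approximate-DP slack across the session, while $\delta^\ast$ is the conversion slack, giving $\delta_{\mathrm{tot}} = k \cdot \delta_{\mathrm{ent}} + \delta^\ast$.

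The main obstacle I anticipate is rigorously justifying linear RDP composition under \emph{adaptive} budget selection: the parameter $\varepsilon_i$ is a function of query sensitivity, local confidence, and historical consumption, all of which are correlated with the transcript, so I must verify that the odometer/filter conditions are genuinely satisfied rather than assumed. In particular, the adaptive choice must be measurable with respect to the prior transcript and the RDP filter's halting condition must be respected, otherwise the clean additive bound can fail. I would therefore state this adaptivity assumption explicitly and cite the filter result to close the gap, flagging that the bound holds per fixed order $\alpha$ before any optimization over $\alpha$ is performed.
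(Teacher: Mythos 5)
Your proposal follows essentially the same route as the paper's own proof sketch: post-processing invariance for the remote outputs and proofs, linear RDP accumulation under adaptive budget choice via the privacy odometer framework, and Mironov's conversion to $(\varepsilon,\delta)$-DP. If anything, your version is more careful than the paper's sketch --- you explicitly identify where the $k\cdot\delta_{\mathrm{ent}}$ term originates and flag that the odometer/filter measurability and halting conditions must actually be verified rather than assumed, a gap the paper's one-line appeal to \cite{rogers2016privacy} leaves open.
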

\begin{proof}[Proof sketch]
	Post processing invariance ensures that remote outputs and proofs add no additional cost. Each delegated query contributes RDP cost from gating and entity privatization. Costs add linearly under adaptive choices by the privacy odometer framework \cite{rogers2016privacy}. Advanced composition and conversion to $(\varepsilon,\delta)$-DP yield the bound.
\end{proof}

\section{Threat Model}
\label{threatmodel}

\subsection{System Assumptions}

We assume that the user device is trusted to execute the local agent faithfully and to store configuration data securely. Local execution is not subject to compromise by malware or side channels. Basic cryptographic primitives such as SHA-256 are assumed secure. Communication between agents is encrypted and authenticated, but external observers may monitor traffic patterns. The remote agent is treated as an untrusted black box for privacy but trusted to return correct inference outputs once inputs are received. The translation agent is semi-trusted: it may apply correct transformations or deviate, but it must emit proofs that the local agent can verify.

\subsection{Adversary Classes}

\paragraph{External observers.} Network adversaries can inspect communication between the device and the cloud. Against such adversaries AVEC guarantees differential privacy for delegated queries and randomized protection for the delegation bit. Because budget allocation and transformation occur locally, raw queries never leave the device. Proofs of transformation consist only of declared parameters and hashes, which add no leakage beyond the transformed text.

\paragraph{Semi-honest service providers.} The remote agent is assumed to execute inference correctly but may attempt to learn from the queries it receives. AVEC protects against this by sending only privatized queries, ensuring privacy through differential privacy’s closure under post processing. Thus the remote agent gains no advantage from additional computation beyond the declared privacy bounds.

\paragraph{Malicious translation agents.} Stronger adversaries may attempt to subvert privacy by applying weaker transformations while still emitting syntactically correct proofs. AVEC defends against parameter tampering by requiring local verification of proofs, but the current mechanism ensures only consistency, not full execution integrity. This residual risk motivates future integration of verifiable computation or lightweight zero knowledge proofs, which can bind the execution trace to the declared parameters.

\subsection{Out of Scope}

The current framework does not defend against local device compromise, side channel leakage, or denial of service against the translation agent. It also does not address memorization or membership inference risks inside the remote agent’s internal model weights. These issues are orthogonal to AVEC and require independent mitigations. Within its defined boundaries, AVEC provides differential privacy against honest but curious observers and semi-honest providers, and proof-based integrity checks against malicious transformation behavior.


\section{Findings from Simulation}

The experiments indicates that AVEC substantially reduces delegation and improves cost and latency profiles compared with fixed or static baselines. Across all thirty trials, the dynamic high privacy configuration delegated on average two thirds of queries, compared with one hundred percent delegation under the always delegate baseline. This reduction translates directly into lower cumulative cost and latency, as many queries are answered locally without incurring transformation or remote inference overhead. The effect is consistent across all simulated domains and user profiles.

The statistical analysis suggests that these improvements are highly significant. Analysis of variance across average delegated costs yields an $F$ statistic exceeding 3500 with $p < 0.0001$, rejecting the null hypothesis that all policies perform equivalently. Non parametric verification with the Kruskal–Wallis test leads to the same conclusion. Pairwise comparisons reinforce these findings: the independent samples $t$ test between AVEC high privacy and always delegate produces $p = 0.0001$ with a Cohen’s $d$ greater than 1.1, indicating a very large effect size. The Wilcoxon signed rank test indicates significant cost differences between AVEC and fixed epsilon policies. Delegation rate distributions also differ sharply: the chi squared test between AVEC high privacy and always delegate gives a test statistic over 1000 with $p < 0.0001$ and Cramer’s $V$ around 0.2, signifying a meaningful medium strength association. The Mann–Whitney test on delegation rates corroborates these results.

Beyond statistical significance, the magnitude of the gains highlights the advantage of adaptive budgeting. For delegated queries, the dynamic allocation of $\varepsilon$ allows the system to tailor the level of privatization to query sensitivity, maintaining utility without exhausting budget. The verifiability mechanism ndicates that proofs of transformation match declared parameters, with all simulated trials yielding successful verification and no acceptance of invalid transformations. Together these outcomes show that AVEC can provide not only statistically better performance but also practically relevant reductions in delegation and resource consumption, while ensuring explicit verifiability of privacy enforcement.

\section{Discussion and Limitations}

The evaluation demonstrates that adaptive edge privacy budgeting and verifiable transformation significantly reduce delegation and improve efficiency, but several limitations remain. The current results are derived from a controlled simulation environment rather than deployment with real language models. While simulation allows reproducibility and rigorous statistical testing, it cannot capture the full complexity of model behavior, network variability, or adversarial manipulation. Future work must test AVEC under real conditions with large language model APIs to assess robustness in practice.

A second limitation lies in the heuristic implementation of sensitivity scoring and local confidence. These functions approximate model behavior but are not grounded in actual model predictions. Their effectiveness in guiding budget allocation may vary when applied to diverse real world models, and they could themselves become targets of adversarial exploitation. Similarly, the cryptographic proof of transformation is implemented as a hash based integrity check, which ensures consistency but does not constitute a full privacy proof. Without zero knowledge or verifiable computation, a malicious translation agent could in principle violate privacy while still producing a syntactically correct proof. This motivates the integration of stronger cryptographic methods in future extensions.

The utility analysis is based on categorical response types rather than task specific or human evaluated metrics. While suitable for establishing privacy utility trade offs in simulation, more granular measures are needed for tasks such as summarization or reasoning to fully quantify the impact of transformations. Scalability also remains an open concern: although the local agent naturally scales across devices, the translation agent could face significant overhead in large enterprise deployments where millions of users demand verifiable proofs. Designing distributed or lightweight proof generation is therefore a priority for operational feasibility.

Finally, the framework assumes relative stability of data sensitivity and transformation efficacy. In practice, language models evolve rapidly, and shifts in data distribution may undermine fixed sensitivity heuristics or transformation policies. Adapting to evolving models and adversarial environments requires mechanisms for continuous learning without compromising privacy guarantees. Despite these limitations, the results establish AVEC as a foundation for combining adaptive budgeting, verifiable transformation, and formal differential privacy accounting in LLM inference, setting the stage for future systems that offer both rigorous privacy and practical usability.

\section{Conclusion and Future Work}

This work introduced AVEC, a framework for adaptive edge privacy budgeting and verifiable transformation in large language model interactions. By shifting budget allocation and verification to the user device, AVEC reduces reliance on remote services while enabling explicit auditing of privacy enforcement. The formalization of entity level randomized response within this framework provides clear privacy guarantees, and the use of R\'enyi differential privacy with privacy odometers establishes a principled method for accounting under adaptive allocation. Theoretical results including utility ceilings for entity recovery, bounds on leakage through delegation, and impossibility of deterministic gating or hash only proofs demonstrate that the design rests on firm foundations. Simulations across thirty trials and twenty five domains show that AVEC reduces delegation, lowers cost and latency, and maintains transparent privacy utility trade offs compared with static baselines.

Future work will extend AVEC in several directions. More expressive utility metrics, including task specific and human evaluated measures, are needed to capture the trade off between privacy and effectiveness in real applications. Stronger cryptographic instantiations such as lightweight zero knowledge proofs or verifiable computation can replace the current hash based mechanism to provide end to end assurance against malicious translation agents. Integration with federated learning offers a pathway to use privacy enhanced interactions as signals for local model improvement, addressing the bootstrap problem in developing capable on device LLMs. Scaling the translation agent to enterprise settings will require distributed proof generation and efficient verification protocols. Finally, deployment with real LLM services will allow measurement of adversarial robustness and user experience in practice. By addressing these challenges, AVEC can evolve from a foundational framework to a practical system for trustworthy private language model deployment.
\bibliography{avec_refs}
\bibliographystyle{iclr2026_conference}

\appendix
\section{Simulation Details}
\label{appendix:simdetails}

This appendix describes the simulation environment used to evaluate AVEC. The framework models the interaction among a local agent, a translation agent, and a remote agent. All components are implemented in Python 3 on a Linux system with numpy, pandas, and scipy, using seeded pseudo random generators to ensure reproducibility. A master seed of 1729 is used, with per trial seeds deterministically derived as 1729 plus the trial identifier.

Each trial involves one hundred users, each assigned a configuration with a privacy preference (high or medium) and a maximum cumulative budget in the range one to two. Every user issues ten queries, giving one thousand queries per trial. Queries are sampled from a pool covering twenty five domains, including medical, financial, legal, creative, environmental, and security contexts. Each query is designed to contain at least one sensitive entity. The order of queries is randomized per user and per trial. Thirty independent trials are executed for each scenario, producing thirty thousand queries and supporting robust statistical analysis.

The local agent applies adaptive budgeting with base values of 0.05 for high privacy and 0.10 for medium privacy. The sequence decay factor is set as $F_{\mathrm{seq}} = \exp(-n/\kappa)$ with $\kappa=5$, and Laplace noise with scale $\Delta f / \varepsilon_{\eta}$, where $\Delta f=0.3$ and $\varepsilon_{\eta}=0.01$, is added to protect the budgeting logic. Local confidence is computed from a heuristic classifier combining cache hits and lightweight features, with a threshold of 0.8 required for local answers. A small cache of common facts is provided to simulate certain high confidence responses. The local agent enforces a hard budget cap to guarantee that the total expenditure never exceeds the configured maximum.

The translation agent implements entity level randomized response. Sensitive entities are detected using regular expressions and named entity recognition, and privatized within a vocabulary of bounded size, typically between eight and sixty four. Transformation parameters, the list of redacted or substituted entities, the effective budget, a policy identifier, and a timestamp are recorded, along with a SHA-256 hash over a canonical serialization of these fields. The proof does not contain any function of the raw input. Latency is modeled as a base delay between five and fifty milliseconds plus a term proportional to input length, and cost is modeled as a base value between $2 \times 10^{-4}$ and $2.5 \times 10^{-3}$ with an additional length dependent increment.

The remote agent models a generic large language model API. Base latency is drawn uniformly between 300 and 800 milliseconds and base cost between 0.007 and 0.015 in normalized units. Multiplicative modifiers reflect the difficulty of answering privatized queries: highly privatized inputs incur factors between 0.4 and 0.6, moderately privatized between 0.7 and 0.9, lightly privatized between 0.9 and 1.1, and unmodified queries between 0.95 and 1.05. Responses are labeled as local high confidence or remote with varying levels of privatization.

We evaluate AVEC in high and medium privacy settings and compare against five baselines: always delegate, fixed epsilon with parameters 0.1, 1.0, and 5.0, and always local. For every query, we log the delegation decision, the budget consumed, the simulated cost and latency, the response label, and the verification result. Verification succeeds if the recomputed SHA-256 hash of the proof fields matches the received proof. Additional logs include the number of privatized entities, the effective vocabulary size, and the remaining budget after each query.

Statistical tests are conducted on per trial averages. Analysis of variance across scenarios is complemented by Kruskal–Wallis tests, while pairwise comparisons use independent samples $t$ tests and Wilcoxon signed rank tests. Delegation distributions are compared using chi squared and Mann–Whitney tests. Effect sizes are reported as Cohen’s $d$ for numerical comparisons and Cramer’s $V$ for categorical comparisons. Significance is assessed at $p < 0.05$ with Bonferroni correction for families of tests, and confidence intervals are computed by nonparametric bootstrap with one thousand resamples. All configuration parameters, seeds, and query pools are versioned, and each scenario produces deterministic aggregate statistics, ensuring reproducibility of all reported results.

\section{Broader Impact}

AVEC addresses a central obstacle in deploying language models responsibly by enabling verifiable privacy guarantees at inference time. By shifting privacy budgeting and enforcement to the edge, the framework gives users meaningful control over how their queries are processed and logged. This shift has the potential to broaden adoption of LLMs in sensitive domains such as healthcare, finance, education, and public services, where data exposure remains a key barrier. The bootstrap problem in local model development can also be mitigated by allowing privacy enhanced telemetry to be reused for on device training, accelerating the emergence of capable local models without compromising privacy. Explicit verifiability further increases user trust by moving from implicit assurances to cryptographically auditable artifacts. Taken together, these elements can foster confidence in responsible AI deployment and create opportunities for models to serve populations that would otherwise be excluded on privacy grounds. At the same time, AVEC contributes to resource efficiency by reducing reliance on repeated remote calls, potentially lowering both operational cost and environmental impact.

\section{Ethics}

Despite its advantages, AVEC raises important ethical considerations. The framework protects user inputs but does not inherently prevent harmful or biased outputs from the remote model. Systems built on AVEC must therefore incorporate independent content moderation and fairness auditing to avoid amplifying misinformation or reinforcing stereotypes. The budgeting mechanism creates a new axis of disparity: users who exhaust their privacy budget quickly may experience degraded service, which raises concerns of unequal access. Transparent communication of budget status and careful choice of default values are necessary to reduce the risk of unfair outcomes. The use of heuristic sensitivity scoring also introduces the possibility of systematic bias if certain user groups have their queries consistently judged more sensitive or less confident, thereby affecting service quality. Accountability mechanisms must extend beyond aborting unverified queries, including logging and audit trails that balance transparency with privacy. Finally, by making privacy enforcement cryptographically verifiable, AVEC may complicate internal system auditing if logs contain only transformed queries. Future deployments should explore privacy preserving telemetry that allows operators to monitor system health without undermining user protections. Careful attention to these ethical challenges is essential for responsible use of the framework.
\section{Extended Related Work}
\label{appendix:relatedwork}

This appendix provides an extended survey of prior work related to AVEC. The main text highlights only the most directly relevant contributions; here we situate our framework more broadly across large language model agents, differential privacy, edge learning, verifiable computation, anonymization, and orchestration.

\subsection{LLMs and Agent Paradigms}
Transformer-based large language models \cite{vaswani2017attention} have enabled the rise of autonomous LLM agents that perform multi-step reasoning and action. Notable examples include ReAct \cite{yao2023react}, which combines reasoning traces with action calls, and Voyager \cite{wang2023voyager}, an open-ended embodied agent. Surveys such as Wang et al. \cite{wang2024surveyagents} summarize the emerging design space of LLM-based agents. These systems typically assume powerful remote models, whereas AVEC addresses the distinct challenge of bootstrapping privacy-preserving local agents.

\subsection{Differential Privacy in Machine Learning}
Differential privacy (DP) provides rigorous guarantees against individual information leakage \cite{dwork2014algorithmic}. Its application to deep learning was established through DP-SGD \cite{abadi2016deep}, while advanced analysis has leveraged R{\'e}nyi DP \cite{mironov2019rdp}. Composition under repeated use has been studied via privacy odometers and filters \cite{rogers2016privacy}. AVEC builds on these foundations by proposing adaptive, edge-based per-query budgeting. 

\subsection{Edge AI and Federated Learning}
Federated learning (FL) \cite{mcmahan2017communication} enables decentralized model training without centralizing data, offering strong privacy benefits. Extensions such as edge-based DP collection \cite{jiang2021privacy} further strengthen privacy by acting at the data source. AVEC aligns with this paradigm but focuses on inference-time controls: adaptive DP allocation and verifiable transformation at the edge.

\subsection{Verifiable Computation and Cryptographic Proofs}
Verifiable computation (VC) and zero-knowledge proofs (ZKPs) aim to establish correctness of computation without revealing sensitive inputs. Bulletproofs \cite{boneh2018bulletproofs} and earlier work on non-interactive VC \cite{gennaro2010non} represent important milestones. Surveys such as Peng \cite{peng2025survey} outline recent progress on proofs for machine learning. AVEC departs from full cryptographic proofs in favor of lightweight hash-based proofs of transformation, leaving stronger instantiations to future work.

\subsection{Privacy-Preserving Anonymization and Text Analysis}
Classical data anonymization frameworks include $k$-anonymity \cite{sweeney2002kanon}, $\ell$-diversity \cite{machanavajjhala2007ldiversity}, and $t$-closeness \cite{li2006t}, each offering varying robustness against re-identification. Surveys on text privacy \cite{majeed2020anonymization} highlight challenges unique to language data. More recent work explores semantic generalization for privacy-preserving release \cite{iwendi2020n}. AVEC connects to this literature through its use of entity-level randomized response as a text transformation mechanism.

\subsection{LLM Orchestration and Caching}
Agent orchestration frameworks for LLMs \cite{zou2025survey} and semantic caching approaches \cite{li2024scalm} aim to improve efficiency and coherence of multi-step interactions. On-device language models \cite{xu2024device} are increasingly considered to mitigate latency and privacy costs. AVEC complements these efforts by embedding adaptive privacy budgeting and verifiability into the orchestration loop.

\subsection{Adversarial Robustness Context}
Finally, adversarial robustness studies \cite{goodfellow2014explaining} illustrate how small perturbations can manipulate model behavior. While not the prima

\end{document}